\title[Identifying Principal Stratum Causal Effects]{Identifying Principal Stratum Causal Effects Conditional on a Post-treatment Intermediate Response}
\def\bSig\mathbf{\Sigma}
\DeclareMathOperator{\logit}{logit}
\DeclareMathOperator{\rank}{rank}
\DeclareMathOperator{\dimselfdefined}{dim}
\begin{document}

\maketitle

% \pagerange{\pageref{firstpage}--\pageref{lastpage}} 
\begin{abstract}%
In neoadjuvant trials on early-stage breast cancer, patients are usually randomized into a control group and a treatment group with an additional target therapy. Early efficacy of the new regimen is assessed via the binary pathological complete response (pCR) and the eventual efficacy is assessed via long-term clinical outcomes such as survival. Although pCR is strongly associated with survival, it has not been confirmed as a surrogate endpoint. To fully understand its clinical implication, it is important to establish causal estimands such as the causal effect in survival for patients who would achieve pCR under the new regimen. Under the principal stratification framework, previous studies focus on sensitivity analyses by varying model parameters in an imposed model on counterfactual outcomes. Under mild assumptions, we propose an approach to estimate those model parameters using empirical data and subsequently the causal estimand of interest. We also extend our approach to address censored outcome data. The proposed method is applied to a recent clinical trial and its performance is evaluated via simulation studies. 
\end{abstract}

\begin{keywords}%
Causal inference; Principal stratification; Identification; Randomized neoadjuvant trial; Censored outcome data.
\end{keywords}

\section{Introduction}
\label{s:intro}

We have seen a major shift in the conduct of breast cancer clinical trials in recent years. Traditionally, breast cancer patients are randomly assigned to control or treatment after the primary surgery. 
% an experimental drug or treatment was administered with a standard systemic therapy following surgery to one group of randomly assigned patients while patients randomly assigned to the control group received the standard therapy alone. 
Patients from the two groups are then followed over years for comparison of their long-term outcomes such as disease-free survival and overall survival. However, in recent years, there have been an increasing number of neoadjuvant trials where many of the systemic therapies are administered prior to the breast surgery \citep{food2013guidance}.

The primary endpoint in neoadjuvant breast cancer clinical trials is pathological complete response (pCR), a binary indicator of absence of invasive cancer in the breast and auxiliary nodes \citep{food2013guidance}. The rationale for using pCR is that efficacy of a treatment can be assessed at the time of surgery 
% , usually six to seven months after therapy commences, 
instead of the typical 5-10 years of follow-up on survival endpoints in the adjuvant setting. Strong association between pCR and survival has been well documented  \citep{cortazar2014pathological, von2012definition}, making pCR an attractive candidate surrogate. In the latest guidance of the U.S. Food and Drug administration (FDA), pCR is accepted as an endpoint to support accelerated drug approvals, provided certain requirements are met \citep{food2013guidance}. 
% With emerging improvement in sequencing tumor tissue samples prior to treatment, many efforts have been made to develop prognostic and predictor markers for pCR. 
It is important to decipher the causal relationship among treatment, pCR, and survival in order to interpret the efficacy in survival when pCR is involved. 
% For example, a useful causal estimand would be the treatment efficacy in survival among patients who would achieve pCR under the new regimen, compared with the control regimen.

In the recently published National Surgical Adjuvant Breast and Bowel Project (NSABP) B-40 trial, patients with operable human epidermal growth factor receptor 2 (HER2)-negative breast cancer were randomly assigned to receive or not to receive bevacizumab along with their neoadjuvant chemotherapy regimens \citep{bear2012bevacizumab}. The addition of bevacizumab significantly increased the rate of pCR. 
% (28.2\% without bevacizumab vs. 34.5\% with bevacizumab, p-value $=0.02$). 
In terms of the long-term outcomes, patients on bevacizumab showed improvements in event-free survival (EFS) and overall survival (OS) compared to the control patients \citep{bear2015neoadjuvant}. 
% (EFS: hazard ratio 0.80, p-value $=0.06$; OS: hazard ratio 0.65, p-value $=0.004$) \citep{bear2015neoadjuvant}. 
Some investigators are interested in the comparison of survival between pCR patients in the treatment group and pCR patients in the control group. Such comparison, however, is problematic because these two groups of pCR patients are different and any direct comparison between them lacks causal interpretation. 

Under the counterfactual framework \citep{rubin1974estimating}, potentially a patient has a pCR status after taking the control regimen and a pCR status after taking the treatment. 
% In practice, however, only one of them is observable and the other one is not and is called a counterfactual outcome by \cite{rubin1974estimating}. The individual causal effect is defined as the difference between the potential outcomes for a subject: the pCR status under the control regimen and the pCR status under the treatment regimen. The average causal effect is the population average of individual causal effects \citep{rubin1974estimating}.
Similarly, one can define counterfactual outcomes and causal effects in survival status (0/1) after a certain time period such as three years. 
% Neither individual causal effects nor average causal effect is directly attainable because only one of the potential outcomes is observable. However, the average causal effect can be estimated indirectly in randomized studies. 
% Because of randomization, subjects in the control group and subjects in the treatment group are random samples of the study population and the intention-to-treat estimates are the average causal effect of treatment assignment. 
The principal stratum framework proposed by \cite{frangakis2002principal} can be used to describe causal effect in long-term outcomes (such as EFS) with an intermediate outcome (such as pCR) involved. %, \cite{frangakis2002principal} proposed to split study population into principal strata. 
Each principal stratum consists of subjects with the same pair of potential pCR status: the pCR status under the control regimen and the pCR status under the treatment regimen. 
% Each principal stratum is by definition independent of treatment assignment since it contains information on counterfactual, or potential outcomes rather than the observed outcome for a specific treatment assignment. 
One can then define the causal effect of treatment in EFS on each principal stratum. 
% Additionally, any union of the four basic principal strata would also be a valid principal stratum as it leads to comparisons among a common set of individuals. 
% \cite{gilbert2015surrogate} show the principal stratification framework is useful for evaluating whether and how treatment effects differs across subgroups characterized by the intermediate variable, thus being firmly associated with the utility of the treatment marker.

% Under the principal stratification framework, \cite{clarke2010identification} pointed out that local complier causal effects can be identified using structural mean models when the treatment selection is monotone. To evaluate how post-treatment variables reliably predict the effect of treatment on the clinical outcome, \cite{wolfson2010statistical} proposed to use sensitivity analysis while \cite{huang2011comparing} developed a semiparametric likelihood method to compare multiple principal surrogate endpoints. \cite{gabriel2014evaluating} later proposed to evaluate surrogate endpoints under the setting of time-to-event data.

Here we propose a method to identify and estimate principal stratum causal effects for a binary outcome and later extend our method for censored outcome data. The causal estimand of interest is the treatment efficacy in 3-year EFS and OS among patients who would achieve pCR under chemotherapy plus bevacizumab as in our motivating study, the NSABP B-40 trial. A model of counterfactual outcome given the observed data is imposed. Using some probabilistic arguments, we connect the model parameters with quantities that can be empirically estimated from the observed data. The resulting equations allow us to estimate the model parameters and subsequently the causal estimand of interest, and resolve the identifiability issue. 

Our paper is organized as follows. Section~\ref{s:related} presents related work in principal stratum causal effects. Section~\ref{s:method} introduces the standard data settings, causal estimands of interest, and a regression model in the context of a randomized neoadjuvant trial. In Section~\ref{ss:estimation}, we provide key assumptions for identification of the causal estimand and introduce the proposed method. In Section~\ref{s:sim}, we conduct a simulation study to assess the performance of our method in terms of bias and coverage of bootstrap confidence intervals. In Section~\ref{appb40}, we apply the proposed method to the motivating NSABP B-40 study. We conclude with a discussion of the proposed method and future work in Section~\ref{s:discuss}.

\section{Related Work}
\label{s:related}

\cite{frangakis2002principal} propose to split study population into principal strata. 
% Each principal stratum consists of subjects with the same pair of potential pCR status: the pCR status under the control regimen and the pCR status under the treatment regimen. 
Each principal stratum is by definition independent of treatment assignment since it contains information on counterfactual, or potential outcomes rather than the observed outcome for a specific treatment assignment. One can then define treatment effects 
% of treatment in EFS
on each principal stratum. Additionally, any union of the basic principal strata would also be a valid principal stratum as it leads to comparisons among a common set of individuals. \cite{gilbert2015surrogate} show the principal stratification framework is useful for evaluating whether and how treatment effects differs across subgroups characterized by the intermediate variable, thus being firmly associated with the utility of the treatment marker.

Identification of principal stratum causal effects is in general difficult. A major challenge is that we do not observe the individual membership of principal stratum because of its counterfactual nature \citep{gilbert2008evaluating, wolfson2010statistical}. Under the principal stratification framework,  \cite{gilbert2003sensitivity} propose to perform sensitivity analyses by varying model parameters in an imposed parametric model for counterfactual outcomes.  \cite{shepherd2006sensitivity} and \cite{jemiai2007semiparametric} extend this sensitivity analyses approach by including baseline covariates in the model. 
These sensitivity analyses can provide researchers with a range of causal estimates under different values of the sensitivity parameters.  
% or different choices of a function of the counterfactual outcome in the model. 
In reality, however, it is often unclear what the plausible values are for these sensitivity parameters and the selected combinations may not be exhaustive.  \cite{li2010bayesian} and \cite{zigler2012bayesian} use Bayesian approaches to model the joint distribution of the counterfactual intermediate outcomes and long-term outcomes and incorporate prior information regarding non-identifiable associations. The lack of identifiability, however, still exists and is reflected by the over-coverage of confidence intervals in their simulation studies. 

Principal stratum causal effects with regards to outcomes truncated by death are not identifiable without further assumptions \citep{zhang2003estimation, kurland2009longitudinal, lee2010causal}. \cite{tchetgen2014identification} identify causal effects by borrowing information from post-treatment risk factors of the intermittent outcome and the causal estimand may vary according to the selected risk factors. Instrumental variables are also introduced to provide information on the unobserved principal strata and the justification of that exclusion restriction assumption is often challenging \citep{ding2011identifiability, wang2017identification}. 

All the above methods either fall into sensitivity analyses or require exclusion restriction assumptions. In this paper, we propose a method to identify and estimate principal stratum causal effects under data settings as \cite{shepherd2006sensitivity} for a binary outcome and later extend our method to address issues of censored outcome data under mild assumptions. Identification of the causal effect is achieved with the bias minimal and the coverage probabilities close to the nominal levels.
% The causal estimand of interest is the treatment efficacy in 3-year EFS and OS among patients who would achieve pCR under chemotherapy plus bevacizumab as in our motivating study, the NSABP B-40 trial. Similarly we impose a logistic regression model for a counterfactual outcome given the observed data. Using some probabilistic arguments, we connect the model parameters with quantities that can be empirically estimated from the observed data. The resulting equations allow us to estimate the model parameters and subsequently the causal estimand of interest, and resolve the identifiability issue. 

\section{The Principal Stratification Framework of Interest}
\label{s:method}

\subsection{Standard Setting for Neoadjuvant Studies}

Consider a neoadjuvant breast cancer clinical trial where patients are randomized to two treatment groups. For subject $i = 1, 2, \ldots, n$, let $Z_i \in \{0, 1\}$ be the binary treatment assignment; $X_i \in \varGamma = \{0, 1, \ldots, K\}$ be a baseline discrete covariate. A continuous baseline variable $X_i$ such as clinical tumor size, would be grouped into $K+1$ categories based on scientific knowledge. We will discuss extensions to the scenarios with a continuous $X_i$ in Section~\ref{s:discuss}. Throughout this paper, we assume that the stable unit treatment value assumption (SUTVA) \citep{rubin1980randomization} holds: the potential outcomes of any individual $i$ are unrelated to the treatment assignment of other individuals. Then we can denote $S_i(Z_i) \in \{0, 1\}$ as a binary post-randomization intermediate response such as the pCR status for subject $i$ under treatment $Z_i$ (possibly counterfactual). And denote $Y_i\{Z_i,S_i(Z_i)\}=Y_i(Z_i) \in \{0, 1\}$ as a binary long-term outcome of interest such as the EFS status at 3-year after study entry for subject $i$ under treatment $Z_i$ (possibly counterfactual). For  individual $i$, $\{Z_i, X_i, S_i(Z_i), Y_i(Z_i)\}$ represents the observed data of treatment assignment, baseline covariate, intermediate response and long-term outcome. If $Z_i = 0$, $\{S_i(0),Y_i(0)\}$ are observed and $\{S_i(1),Y_i(1)\}$ are counterfactual. If $Z_i = 1$, then $\{S_i(1),Y_i(1)\}$ are observed and $\{S_i(0),Y_i(0)\}$ are counterfactual. Thus for individual $i$, the complete counterfactual data would be $\{Z_i, X_i, S_i(0), S_i(1), Y_i(0), Y_i(1)\}$. Another important assumption is the monotonicity assumption: $S_i(0) \leq S_i(1)$ \citep{angrist1996identification}, as in the motivating NSABP B-40 study, addition of bevacizumab led to improved pCR \citep{bear2012bevacizumab}. We also assume for subject $i$, the treatment assignment $Z_i$ is independent of $X_i$ and the potential outcomes. 

Under the principal stratification framework, denote the principal strata to be $E_{jk} = \{i: S_i(0) = j, S_i(1) = k\}$,
$j,k=0,1$. The principal stratum causal effects of interest are
\begin{eqnarray*}
\theta_{jk}=\mathbb{E}\{Y_i(1)-Y_i(0)|i\in E_{jk}\},~~j,k=0,1.
\end{eqnarray*}

Under the monotonicity assumption, the principal stratum $E_{10}$ is empty. In the NSABP B-40 study, we are interested in the causal effect in $E_{01} \cup E_{11}$, those who would achieve pCR had they been treated with chemotherapy plus bevacizumab: 
% equivalently,
\begin{eqnarray*}
\theta=\mathbb{E}\{Y_i(1)-Y_i(0)|i\in E_{+1}=E_{01} \cup E_{11}\} = \mathbb{E}\{Y_i(1)-Y_i(0)|S_i(1)=1\}.
\end{eqnarray*}

Other principal stratum causal effects such as $\theta_{jk}$ can be estimated using a similar approach as we outline in Section~\ref{ss:estimation}.

\subsection{Modeling a Counterfactual Outcome}

In order to estimate the principal stratum causal effects,  \cite{gilbert2003sensitivity} propose to use a logistic regression model for $\Pr \{S_i(1)=1|S_i(0)=0, Y_i(0)\}$ as
\begin{align*}
    \Pr \{S_i(1)=1|S_i(0)=0, Y_i(0)\}&=\logit^{-1}\{\beta_0+\beta_1 Y_i(0)\}.
\end{align*}

\cite{shepherd2006sensitivity} further extend the logistic regression by incorporating baseline covariates $X_i$ as
\begin{align}
\Pr \{S_i(1)=1|S_i(0)=0, Y_i(0), X_i = x\}&=\logit^{-1}\{\beta_0+\beta_1 Y_i(0)+\beta_2 x\} \nonumber \\
&=\frac{exp\{\beta_0+\beta_1 Y_i(0)+\beta_2 x\}}{1+exp\{\beta_0+\beta_1 Y_i(0)+\beta_2 x\}} \label{Model}.
\end{align}

\cite{jemiai2007semiparametric} consider a more general model framework:
\begin{align*}
\Pr\{S_i(1)=1|S_i(0)=0, Y_i(0),X_i=x\}=w[r(x)+g\{Y_i(0),x\}]
\end{align*}
where $w(u)\equiv\{1+exp(-u)\}^{-1}$ and $g(\cdot,\cdot)$ is a known function. In the case of \cite{shepherd2006sensitivity}, $g(u,v)=\beta_1 u$ with $\beta_1$ known. \cite{jemiai2007semiparametric} show that under the monotonicity assumption, inference could be made on $\theta$ for any fixed function $g$ and sensitivity analyses could be performed by varying $g$.

\section{The Proposed Method} \label{ss:estimation}

\subsection{Key Identification Assumptions}
\label{s:assump}

Identification of causal effects is achieved through two key assumptions. First, the monotonicity assumption: $S_i(0) \leq S_i(1)$ \citep{angrist1996identification}. That is, a subject who responds under the control would respond if given the treatment. This monotonicity assumption could prove valuable \citep{bartolucci2011modeling} and can be justified in many scenarios that the additional therapy would help to improve the response. In the motivating NSABP B-40 study, addition of bevacizumab led to improved pCR \citep{bear2012bevacizumab}. 
Second, a parametric model is used to describe the counterfactual response under the treatment for a control non-respondent. Both the future long-term outcome and a baseline covariate are predictors in this parametric model. It is required that the level of the covariates is at least of the same dimension of model parameters and the imposed linearity assumption is critical to identify and estimate those regression parameters. We will elaborate the second assumption in Section~\ref{s:identify}.

\subsection{Identification of Model Parameters and Causal Estimands}
\label{s:identify}
As mentioned in \cite{shepherd2006sensitivity} and will be described in Section \ref{ss:estparam}, when the parameters of model~\eqref{Model} are identified, the causal estimands can be identified.

\begin{lemma} \label{lemma_idf}
For any $x \in \Gamma=\{ 0,1, \ldots, K\}$ and $y \in \{0,1\}$, let $a_x=\Pr\{S(1)=1|S(0)=0,X=x\}$ and  $b_{xy}=\Pr\{Y(0)=y|S(0)=0, X=x\}$. 
Let ${\mathbf a}=(a_0,a_1,\ldots,a_K)^T$ and ${\mathbf b}_y=(b_{y0},b_{y1},\ldots,b_{yK})^T$. 
Define $h_x({\bm{\beta}},{\mathbf a},{\mathbf b}_0,{\mathbf b}_1)=a_x-\sum_{y=0}^1 b_{xy}\logit^{-1}\{\beta_0+\beta_1 y+\beta_2 x\}$, 
and $H({\bm{\beta}}, {\mathbf a},{\mathbf b}_0,{\mathbf b}_1)=\{h_0(\beta,{\mathbf a},{\mathbf b}_0,{\mathbf b}_1),\ldots,h_K(\beta,{\mathbf a},{\mathbf b}_0,{\mathbf b}_1)\}^T$. 

If $\rank\{\frac{\partial H({\bm{\beta}},{\mathbf a},{\mathbf b}_0,{\mathbf b}_1)}{\partial\bm{\beta}}\}=3$, within the neighborhood of $\bm{\beta}$ there is a unique solution $\bm{\beta}=\psi({\mathbf a},{\mathbf b}_0,{\mathbf b}_1)$ such that
$H\{\psi({\mathbf a},{\mathbf b}_0,{\mathbf b}_1),{\mathbf a},{\mathbf b}_0,{\mathbf b}_1\}=0.$
\end{lemma}

\begin{proof}
For all $x\in \Gamma$, we have
\begin{align*}
a_x&=\Pr\{S(1)=1|S(0)=0,X=x\} = \textstyle \sum_{y=0}^1 \Pr\{S(1)=1,Y(0)=y|S(0)=0,X=x\} \\
&= \textstyle \sum_{y=0}^1  \Pr\{Y(0)=y|S(0)=0,X=x\} \Pr\{S(1)=1|Y(0)=y,S(0)=0,X=x\} \\
&= \textstyle \sum_{y=0}^1  b_{xy} \logit^{-1}(\beta_0+\beta_1 y+\beta_2 x). 
\end{align*}

Hence, $H({\bm{\beta}}, {\mathbf a},{\mathbf b}_0,{\mathbf b}_1)=0$ 
and $H(\cdot)$ is a smooth function of $\bm{\beta}, {\mathbf a},{\mathbf b}_0, \text{and } {\mathbf b}_1$. 
By invoking the implicit function theorem, 
when $\rank(\frac{\partial H}{\partial\bm{\beta}})=3$, there exists a smooth function $\psi$ such that $\bm{\beta}=\psi({\mathbf a},{\mathbf b}_0,{\mathbf b}_1)$ and $H\{\psi({\mathbf a},{\mathbf b}_0,{\mathbf b}_1),{\mathbf a},{\mathbf b}_0,{\mathbf b}_1\}=0.$ 
% This completes the proof of Lemma \ref{lemma_idf}.
\end{proof}

The identifiability of model parameter $\bm{\beta}$ depends on the availability of $a_x=\Pr\{S(1)=1|S(0)=0,X=x\}$ and $b_{xy}=\Pr\{Y(0)=y|S(0)=0, X=x\}$, for $x\in \Gamma; y=0,1$. The linearity in $X=x$ in model~\eqref{Model} also plays an important role. In general, when $\beta_2\neq 0$ and $K\geq 2$, there are equal or more equations than the number of unknown parameters in $\bm{\beta}$, Lemma~\ref{lemma_idf} would hold. 
In practice, given $({\mathbf a},{\mathbf b}_0,{\mathbf b}_1)$, one solves for $\bm{\beta}$ such that $H({\bm{\beta}}, {\mathbf a},{\mathbf b}_0,{\mathbf b}_1) = 0$. Then verify that  $\rank\{\frac{\partial H({\bm{\beta}},{\mathbf a},{\mathbf b}_0,{\mathbf b}_1)}{\partial\bm{\beta}}\}=3$ at the solution.

\subsection{Estimation of Causal Estimands}
\label{s:Estimation}

The causal estimand of interest is
\begin{align}
    \theta = \mathbb{E}\{Y_i(1) - Y_i(0)|S_i(1)=1\}
    % &= \mathbb{E}[Y_i(1)|S_i(1) = 1] - \mathbb{E}[Y_i(0)|S_i(1) = 1] \nonumber \\
    &= \mathbb{E}\{Y_i(1)|S_i(1)=1\} - \mathbb{E}\{Y_i(0)|S_i(1)=1\}. \label{trtDiff}
\end{align}

Because $\{Y_i(1),S_i(1)\}$ are observed for subjects in the treatment arm, $\Pr \{Y_i(1)=1|S_i(1)=1\}$ can be estimated by
\begin{align}
    \widehat{\Pr} \{Y_i(1)=1|S_i(1)=1\} 
    % &= \Pr \{Y_i(1)=1|i \in E_{+1}, Z_i=1\} \\
    &= \frac{\sum_i I\{Z_i=1, S_i(1)=1, Y_i(1)=1\}}{\sum_i I\{Z_i=1, S_i(1)=1\}}. \label{prY11_E+1}
\end{align}
where $I(\cdot)$ is the indicator function. 

Meanwhile,
\begin{align}
    \Pr \{Y_i(0) = 1|S_i(1)=1\}
    &= \frac{\Pr \{S_i(1) = 1, Y_i(0) = 1\}}{\Pr \{S_i(1) = 1\}} \nonumber \\
    &=\frac{\sum_x \Pr \{S_i(1) = 1, Y_i(0) = 1|X_i = x\} \cdot \Pr \{X_i = x\}}{\sum_x \Pr \{S_i(1) = 1|X_i = x\} \cdot \Pr \{X_i = x\}} \label{prY01}
\end{align}

In equation \eqref{prY01}, $\Pr \{X_i=x\}$ can be estimated by $\widehat{\Pr} \{X_i=x\} = \displaystyle{\frac{\sum_i I(X_i=x)}{n}}$ and
\begin{align}
    &\Pr \{S_i(1) = 1, Y_i(0) = 1|X_i = x\} \nonumber \\
     &= \textstyle \sum_{j=0}^{1} \Pr \{S_i(1) = 1, Y_i(0) = 1, S_i(0) = j|X_i = x\} \nonumber \\
     &= \textstyle \sum_{j=0}^{1} \Pr \{S_i(1) = 1, Y_i(0) = 1 | S_i(0) = j, X_i = x\} \cdot \Pr \{S_i(0) = j|X_i = x\} \nonumber \\
    &= \textstyle \sum_{j=0}^{1} \Big [ \Pr \{S_i(1)=1|S_i(0)=j, Y_i(0)=1, X_i = x\} \nonumber \\
    &\quad \cdot \Pr \{Y_i(0)=1|S_i(0)=j, X_i = x\} \cdot \Pr \{S_i(0) = j|X_i = x\} \Big ]. \label{factor}
    % &= \Pr \{S_i(1)=1|S_i(0)=0,Y_i(0)=1, X_i = x\} \\
    % &\quad \cdot \Pr \{Y_i(0)=1|S_i(0)=0, X_i = x\} \cdot \Pr \{S_i(0)=0|X_i = x\} \\
    % &\quad + \Pr \{S_i(1)=1|S_i(0)=1,Y_i(0)=1, X_i = x\} \\
    % &\quad \cdot \Pr \{Y_i(0)=1|S_i(0)=1, X_i = x\} \cdot \Pr \{S_i(0)=1|X_i = x\}
\end{align}

In equation \eqref{factor}, $\Pr \{Y_i(0)=1|S_i(0)=j, X_i = x\}$, $j=0,1$, can be estimated by
\begin{gather*}
    \begin{split}
        \widehat{\Pr} \{Y_i(0)=1|S_i(0)=j, X_i = x\} 
        % &= \widehat{\Pr} \{Y_i(0)=1|S_i(0)=0, X_i = x, Z_i=0\} \\
        &= \frac{\sum_i I\{Z_i=0, S_i(0)=j, Y_i(0)=1, X_i=x\}}{\sum_i I\{Z_i=0, S_i(0)=j, X_i=x\}}.
    \end{split} 
\end{gather*}

By the monotonicity assumption, $\Pr \{S_i(1)=1|S_i(0)=1,Y_i(0)=1, X_i = x\}\equiv 1$. 

The estimation of $\Pr \{S_i(j) = 1|X_i = x\}$, $j=0,1$, is described in Lemma \ref{lemma_in}.
\begin{lemma} \label{lemma_in}
Under the monotonicity assumption, for any $x$,
we denote 
\begin{eqnarray*}
\widehat{q}_j(x)=\frac{\sum_i I\{Z_i=j,S_i(j)=1,X_i=x\}}{\sum_i I\{Z_i=j,X_i=x\}},~~j=0,1;
\end{eqnarray*}
the observed proportions of responders in the control group and the treatment group with $X=x$, respectively. 

We use maximum likelihood estimation to estimate $\Pr \{S_i(j) = 1|X_i = x\}$, $j=0,1$.
\begin{itemize}
  \item[(a)] when $\widehat{q}_1(x)\geq \widehat{q}_0(x)$, the maximum likelihood estimate of 
$\Pr\{S_i(j)=1|X_i=x\}$ is $\widehat{q}_j(x)$, $j=0,1;$
  \item[(b)] when $\widehat{q}_1(x)< \widehat{q}_0(x)$, the maximum likelihood estimate of 
$\Pr\{S_i(j)=1|X_i=x\}$ is \\ $\displaystyle{\frac{\sum_i I(S_i=1,X_i=x)}{\sum_i I(X_i=x)}}$, $j=0,1$.
\end{itemize}
\end{lemma}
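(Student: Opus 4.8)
The plan is to read Lemma~\ref{lemma_in} as a constrained maximum-likelihood problem: within the stratum $X_i=x$, maximise a product of binomial likelihoods for the two arms subject to the single order restriction that monotonicity places on the marginal response probabilities.

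First I would fix $x\in\varGamma$ and set $\pi_j(x)=\Pr\{S_i(j)=1\mid X_i=x\}$, $n_j(x)=\sum_i I(Z_i=j,X_i=x)$, and $a_j(x)=\sum_i I(Z_i=j,S_i(j)=1,X_i=x)$, so that $\widehat q_j(x)=a_j(x)/n_j(x)$. Because $Z_i$ is independent of $X_i$ and of the potential outcomes, conditioning on the covariate values and on the treatment assignments makes $\{S_i(1):Z_i=1,X_i=x\}$ an i.i.d.\ $\text{Bernoulli}(\pi_1(x))$ sample and, independently, $\{S_i(0):Z_i=0,X_i=x\}$ an i.i.d.\ $\text{Bernoulli}(\pi_0(x))$ sample. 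Hence the log-likelihood factorises across strata and, within a stratum, across arms, with the $\Pr\{X_i=x\}$'s sitting in a separate multinomial factor; so it suffices to maximise, for each $x$,
\begin{align*}
\ell_x(\pi_0,\pi_1)=a_1(x)\log\pi_1+\{n_1(x)-a_1(x)\}\log(1-\pi_1)+a_0(x)\log\pi_0+\{n_0(x)-a_0(x)\}\log(1-\pi_0).
\end{align*}

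Next I would pin down the parameter space. Monotonicity $S_i(0)\le S_i(1)$ forces $\pi_0(x)\le\pi_1(x)$, and conversely every pair with $0\le\pi_0\le\pi_1\le1$ is realised by an admissible joint law of $\{S_i(0),S_i(1)\}$ (place mass $\pi_0$ on $(1,1)$, $\pi_1-\pi_0$ on $(0,1)$, $1-\pi_1$ on $(0,0)$), so the feasible set is exactly the convex region $R=\{0\le\pi_0\le\pi_1\le1\}$, over which the concave function $\ell_x$ is to be maximised; its unconstrained maximiser over $[0,1]^2$ is $(\widehat q_0(x),\widehat q_1(x))$. If $\widehat q_1(x)\ge\widehat q_0(x)$ this point already lies in $R$ and is therefore the constrained MLE, which is part~(a). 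If $\widehat q_1(x)<\widehat q_0(x)$ it lies outside $R$, so by concavity the constrained maximum is attained where the order constraint is active, i.e.\ on the diagonal $\{\pi_0=\pi_1=:\pi\}$; there $\ell_x$ collapses to $\{a_0(x)+a_1(x)\}\log\pi+\{n_0(x)+n_1(x)-a_0(x)-a_1(x)\}\log(1-\pi)$, the log-likelihood of one $\text{Bernoulli}(\pi)$ sample of size $n_0(x)+n_1(x)$ with $a_0(x)+a_1(x)$ successes, maximised at $\{a_0(x)+a_1(x)\}/\{n_0(x)+n_1(x)\}$. Since $a_0(x)+a_1(x)=\sum_i I(S_i=1,X_i=x)$ and $n_0(x)+n_1(x)=\sum_i I(X_i=x)$, this is the pooled proportion of part~(b).

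The hard part will be making the case~(b) boundary analysis airtight: I must verify that the constrained maximum sits on the diagonal of $R$ rather than on an edge of the unit square. Here I would note that $\widehat q_1(x)<\widehat q_0(x)$ forces $a_0(x)>0$ and $a_1(x)<n_1(x)$, so the relevant partial derivatives of $\ell_x$ blow up to $+\infty$ as $\pi_0\downarrow0$ and to $-\infty$ as $\pi_1\uparrow1$, which rules out the edges $\{\pi_0=0\}$ and $\{\pi_1=1\}$ (and, by the same inequalities, keeps the pooled estimate strictly inside $(0,1)$, so it is not a corner either); the only remaining possibility, an optimum in the relative interior of $R$, is excluded by the KKT/gradient argument since the unconstrained stationary point $(\widehat q_0(x),\widehat q_1(x))$ is infeasible. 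The one further thing to state is the implicit requirement that both $n_0(x)$ and $n_1(x)$ be positive so that $\widehat q_0(x)$ and $\widehat q_1(x)$ are defined, which holds with probability tending to one.
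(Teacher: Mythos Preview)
Your argument is correct and, after reparameterisation, coincides with the paper's: the paper parameterises through the principal-stratum probabilities $(p_{00x},p_{01x},p_{11x})$, writes the likelihood as $(1-p_{11x})^{N_{00x}}p_{11x}^{N_{01x}}(1-p_{+1x})^{N_{10x}}p_{+1x}^{N_{11x}}$, and then notes that this depends only on $p_{11x}=\pi_0(x)$ and $p_{+1x}=\pi_1(x)$ with the nonnegativity of $p_{01x}$ supplying the constraint $\pi_0\le\pi_1$. From that point on both proofs maximise the same two-binomial likelihood over the same order cone and split into the same two cases (the paper's case split $N_{00x}N_{11x}\gtrless N_{01x}N_{10x}$ is algebraically identical to $\widehat q_1(x)\gtrless\widehat q_0(x)$).

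The genuine difference is in presentation, not substance: you go straight to the marginal parameters $(\pi_0,\pi_1)$ and frame the problem as the classical two-sample order-restricted binomial MLE (pool-adjacent-violators with two groups), which is a little cleaner; the paper's detour through the principal-stratum parameters has the advantage of simultaneously delivering $\widehat p_{01x}$, which is needed elsewhere for $\widehat G_L(x)$. Your boundary analysis in case~(b) --- ruling out the edges $\{\pi_0=0\}$ and $\{\pi_1=1\}$ via the divergence of the score and then invoking concavity/KKT to force the optimum onto the diagonal --- is more careful than the paper's, which simply asserts that the constrained MLE sets $\widehat p_{11x}=\widehat p_{+1x}$ when the unconstrained solution is infeasible.
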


In the second scenario, the estimates are the same as the pooled proportion of responders among patients with $X=x$. The proof of Lemma \ref{lemma_in} is presented in Appendix~\ref{supplA}.

The last item in equation \eqref{prY01} needed for estimating the causal estimand is $\Pr \{S_i(1)=1|S_i(0)=0,Y_i(0)=1, X_i = x\}$. \cite{gilbert2003sensitivity} and \cite{shepherd2006sensitivity} conduct sensitivity analyses by varying the values of the $\bm{\beta}$ in model~\eqref{Model}. In Section~\ref{ss:estparam}, we will discuss how to estimate $\bm{\beta}$ using a probabilistic equation.

\subsection{Estimation of Model Parameters}\label{ss:estparam}

Let
\begingroup
\allowdisplaybreaks
\begin{align*}
    & G_L(x) = \Pr \{ S_i(1)=1|S_i(0)=0, X_i=x \}  \\
    & G_R(x,y) = \Pr \{ Y_i(0)=y|S_i(0)=0, X_i=x \}  \\
    & G_M(x,y;\bm{\beta}) = \Pr \{ S_i(1)=1|S_i(0)=0, Y_i(0)=y, X_i=x \}.
\end{align*}
\endgroup
This leads to an equation system:
\begin{equation*}
    G_L(x) = \sum_{y=0}^1 G_M(x,y;\bm{\beta}) \cdot G_R(x,y); x \in \Gamma \label{GM}
\end{equation*}

We can estimate $G_L(x)$ with the following empirical estimates from the observed data by
\begin{align*}
    \widehat{G}_L(x) 
    % &= \widehat{\Pr} \{ S_i(1)=1|S_i(0)=0, X_i=x \} \\
    &= \frac{\widehat{\Pr}\{S_i(0)=0, S_i(1)=1|X_i=x\}}{\widehat{\Pr} \{ S_i(0)=0|X_i=x \}}
\end{align*}
where the numerator and the denominator are derived from Lemma \ref{lemma_in}. The details are presented in Appendix~\ref{supplA}. 

Because $\{X_i, S_i(0), Y_i(0)\}$ are observed for subjects in the control arm, $G_R(x,y)$ can be estimated by
\begin{align*}
    \widehat{G}_R(x,y) 
    % &= \widehat{\Pr} \{ Y_i(0)=y|S_i(0)=0, X_i=x \} \\
    = \frac{\sum_i I\{Z_i=0, S_i(0)=0, Y_i(0)=y, X_i=x\}}{\sum_i I\{Z_i=0, S_i(0)=0, X_i=x\}}
\end{align*}

With $\widehat{G}_L(x)$ and $\widehat{G}_R(x,y)$ estimated from the observed data and $G_M(x,y;\bm{\beta})$ specified as the regression model in equation~\eqref{Model}, we have
\begin{equation}
    \widehat{G}_L(x) = \sum_{y=0}^1 G_M(x,y;\bm{\beta}) \cdot \widehat{G}_R(x,y); ~~x \in \Gamma \label{GM-estimated}
\end{equation}

The number of unknown parameters $\bm{\beta}$ in system of equations \eqref{GM-estimated} is three and the number of equations is $(K+1)$, for $X_i \in \varGamma = \{0, 1, \ldots, K\}$. For \eqref{GM-estimated}, when $K+1<3$, we cannot uniquely solve for $\bm{\beta}$. When $K+1=3$, the number of equations is the same as the number of unknown parameters and in general we can solve for $\bm{\beta}$. When $K+1>3$, there are more equations than the number of unknown parameters, and there are generally no exact solutions to the equation systems (\ref{GM-estimated}). In that case, we propose to estimate $\bm{\beta}$ by 
\begin{equation}
    \widehat{\bm{\beta}} = \operatorname*{\arg\,min}_{\bm{\beta}} \sum_{x=0}^K\{\widehat{G}_L(x) - \sum_{y=0}^1 G_M(x,y;\bm{\beta}) \cdot \widehat{G}_R(x,y)\}^2 \label{betabeta}
\end{equation}
where $\widehat{G}_L(x)$, $\widehat{G}_R(x,y)$ and $G_M(x,y;\bm{\beta})$ are probabilities bounded between 0 and 1. 
% Alternatively we can use logarithm to transform the range of those probabilities to $(-\infty, \infty)$ and estimate $\beta$ by
% \begin{equation}
%     \widehat{\bm{\beta}} = \operatorname*{\arg\,min}_{\bm{\beta}} \sum_{x=0}^K[\log \frac{1+\widehat{G}_L(x)}{1-\widehat{G}_L(x)} - \log \frac{1+\sum_{y=0}^1 G_M(x,y;\bm{\beta}) \cdot \widehat{G}_R(x,y)}{1-\sum_{y=0}^1 G_M(x,y;\bm{\beta}) \cdot \widehat{G}_R(x,y)}]^2. \label{betanew}
% \end{equation}

With $\bm{\beta}$ estimated, we can estimate the causal estimand $\theta$ via the procedure outlined in Section~\ref{s:Estimation}.

% \begin{equation}
%     \widehat{\theta} = \widehat{p}_{1} - \widehat{p}_{0} \label{FtrtDiff-estimated}
% \end{equation}

% \subsection{Consistency of \texorpdfstring{$\widehat{\bm{\beta}}$}{TEXT} and \texorpdfstring{$\widehat{{\theta}}$}{TEXT}} \label{consistency}
\subsection{Consistency of Model Parameters and Causal Estimands} \label{consistency}

Here we provide the theoretical guarantee of our estimators $\bm{\beta}$ and $\theta$.

Let 
\begin{eqnarray*}
&& Q_0^{(x)}(\bm{\beta})= \{{G}_L(x) - \displaystyle{\sum_{y=0}^1} G_M(x,y;\bm{\beta}) \cdot {G}_R(x,y)\}^2,~~ x \in \Gamma; y=0,1\\
&& \tilde{Q}_0(\bm{\beta})=\{Q_0^{(0)}(\bm{\beta}),Q_0^{(1)}(\bm{\beta}),\ldots, Q_0^{(K)}(\bm{\beta})\}^T, \\
&&Q_n(\bm{\beta}) = \sum_{x=0}^K Q_n^{(x)}(\bm{\beta}) = \sum_{x=0}^K \{\widehat{G}_L(x) - \sum_{y=0}^1 G_M(x,y;\bm{\beta}) \cdot \widehat{G}_R(x,y)\}^2
\end{eqnarray*}

\begin{theorem}\label{Theorem}
Under the following conditions:
\begin{itemize}
  \item[(a)] $\bm{\beta}$ satisfies $Q_0^{(x)}(\bm{\beta})=0$, $\forall x\in \Gamma=\{0,1,\ldots,K\}$.
  \item[(b)] $\rank \bigg | \displaystyle{\frac{\partial \tilde{Q}_0(\bm{\beta})}{\partial \bm{\beta}}} \bigg | \geq \dimselfdefined(\bm{\beta})$. 
  \item[(c)] $\displaystyle{\widehat{G}_L(x) \overset{p}{\to} G_L(x), \widehat{G}_R(x, y) \overset{p}{\to} G_R(x, y)}$, as $\displaystyle{n \rightarrow \infty}$, $\forall x \in \Gamma; \forall y=0,1$.
\end{itemize} 
Then $\displaystyle{\widehat{\bm{\beta}}=\operatorname*{\arg\,min}_{\bm{\beta}} Q_n(\bm{\beta}) \overset{p}{\to} \bm{\beta}}$ and the causal estimand $\widehat{\theta} \overset{p}{\to} \theta$ as $n \to \infty$. 
\end{theorem}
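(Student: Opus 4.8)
The plan is to view $\widehat{\bm{\beta}}$ as a minimum-distance (extremum) estimator and run the textbook consistency argument: the three hypotheses are precisely the ingredients it needs. Conditions (a)--(b) will be used to show that the true $\bm{\beta}$ is the well-separated global minimizer of the population criterion $Q_0(\bm{\beta})=\sum_{x=0}^K Q_0^{(x)}(\bm{\beta})\ge 0$; condition (c) will drive the uniform convergence $\sup_{\bm{\beta}}|Q_n(\bm{\beta})-Q_0(\bm{\beta})|\overset{p}{\to}0$; and the two combine through the standard argmin theorem to give $\widehat{\bm{\beta}}\overset{p}{\to}\bm{\beta}$. Consistency of $\widehat{\theta}$ then follows because, tracing the construction in Section~\ref{s:Estimation}, $\widehat{\theta}$ is a fixed continuous function of $\widehat{\bm{\beta}}$ and finitely many sample proportions, so the continuous mapping theorem applies.

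For the population side: $Q_0\ge 0$, and by condition (a) every $Q_0^{(x)}(\bm{\beta})=0$, so $Q_0(\bm{\beta})=0$ and $\bm{\beta}$ is a global minimizer. Condition (b) (full column rank of the relevant Jacobian, i.e.\ local identification) makes this zero isolated, so that---under the identifiability it encodes---$\bm{\beta}$ is the unique minimizer and, I will argue, well separated: $\inf\{Q_0(\bm{\beta}'):\|\bm{\beta}'-\bm{\beta}\|\ge\varepsilon\}>0$ for every $\varepsilon>0$. For the uniform convergence, the crucial observation is that $G_M(x,y;\bm{\beta})=\logit^{-1}(\beta_0+\beta_1 y+\beta_2 x)\in[0,1]$ for \emph{every} $\bm{\beta}$, and $\widehat{G}_R(x,0)+\widehat{G}_R(x,1)=1$, so both $\widehat{G}_L(x)-\sum_y G_M(x,y;\bm{\beta})\widehat{G}_R(x,y)$ and its population analogue lie in $[-1,1]$ uniformly in $\bm{\beta}$. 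Writing $a^2-b^2=(a-b)(a+b)$ and using $|G_M|\le 1$ then gives, for each $x$,
\[
\sup_{\bm{\beta}}\bigl|Q_n^{(x)}(\bm{\beta})-Q_0^{(x)}(\bm{\beta})\bigr|\ \le\ 2\Bigl(\bigl|\widehat{G}_L(x)-G_L(x)\bigr|+\sum_{y=0}^1\bigl|\widehat{G}_R(x,y)-G_R(x,y)\bigr|\Bigr),
\]
and summing over the finitely many $x\in\Gamma$ and applying condition (c) yields the uniform convergence.

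Consistency of $\widehat{\bm{\beta}}$ is then immediate: since $\widehat{\bm{\beta}}$ minimizes $Q_n$,
\[
Q_0(\widehat{\bm{\beta}})\ \le\ Q_n(\widehat{\bm{\beta}})+\sup_{\bm{\beta}}|Q_n-Q_0|\ \le\ Q_n(\bm{\beta})+\sup_{\bm{\beta}}|Q_n-Q_0|\ \le\ 2\sup_{\bm{\beta}}|Q_n-Q_0|\ \overset{p}{\to}\ 0,
\]
so $Q_0(\widehat{\bm{\beta}})\overset{p}{\to}0=Q_0(\bm{\beta})$ and well-separation forces $\|\widehat{\bm{\beta}}-\bm{\beta}\|\overset{p}{\to}0$. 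For $\widehat{\theta}$ I would trace Section~\ref{s:Estimation}: $\widehat{\Pr}\{Y_i(1)=1|S_i(1)=1\}$ in \eqref{prY11_E+1} is a ratio of sample means, consistent by the weak law of large numbers and Slutsky; $\Pr\{Y_i(0)=1|S_i(1)=1\}$ in \eqref{prY01}--\eqref{factor} is assembled from $\widehat{\Pr}\{X_i=x\}$, the conditional proportions $\widehat{\Pr}\{Y_i(0)=1|S_i(0)=j,X_i=x\}$, the maximum-likelihood estimates $\widehat{\Pr}\{S_i(j)=1|X_i=x\}$ of Lemma~\ref{lemma_in} (continuous functions of the consistent $\widehat{q}_0(x),\widehat{q}_1(x)$, hence consistent, as are $\widehat{G}_L(x)$ and $\widehat{\Pr}\{S_i(0)=j|X_i=x\}$), and $G_M(x,1;\widehat{\bm{\beta}})$, which converges in probability to $G_M(x,1;\bm{\beta})$ by continuity of $G_M(x,1;\cdot)$ together with the previous step. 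Since $\widehat{\theta}$ is a fixed continuous rational function of these, with denominators bounded away from $0$ under the trial's positivity, the continuous mapping theorem delivers $\widehat{\theta}\overset{p}{\to}\theta$.

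The main obstacle is the first step---upgrading ``$\bm{\beta}$ is a zero of $Q_0$'' to ``$\bm{\beta}$ is a \emph{well-separated} global minimizer''. Condition (b) only delivers \emph{local} identification; ruling out spurious zeros of $Q_0$ and escape of $\widehat{\bm{\beta}}$ to infinity requires exploiting the strict monotonicity of $\logit^{-1}$ and the availability of enough distinct covariate cells ($K+1\ge 3$), or else folding global identifiability into the hypotheses. I would either verify this directly from the logistic structure or state it explicitly as part of condition (b).
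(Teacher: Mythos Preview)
Your approach is essentially identical to the paper's: both cast $\widehat{\bm{\beta}}$ as an extremum estimator, invoke Hayashi's Proposition~7.1, establish identification from conditions (a)--(b), prove uniform convergence via the same factorization $a^2-b^2=(a-b)(a+b)$ together with the bound $|G_M|\le 1$ to arrive at the identical inequality you display, and then deduce $\widehat{\theta}\overset{p}{\to}\theta$ by Slutsky/continuous mapping. The gap you flag at the end---that condition (b) yields only \emph{local} identification whereas the argmin argument needs a global (well-separated) minimizer---is real, and the paper's own proof has exactly the same issue: its Lemma~2 establishes only a ``unique local minimizer'' via the Implicit Function Theorem but then asserts that Hayashi's global identification condition (I) is satisfied, without ruling out other zeros of $Q_0$ or escape to infinity.
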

The detailed proof of Theorem \ref{Theorem} is presented in Appendix~\ref{supplB}.

\subsection{Extension to Censored Data}

As in the motivating NSABP B-40 study, the long-term outcome $Y_i$ may be subject to right censoring. For any time $T=t_0$ of interest, the binary counterfactual outcomes would be $\{Y_i(0;t_0),Y_i(1;t_0)\}$ and the causal estimand can be formulated as
\begin{align*}
\theta(t_0)=\mathbb{E}\{Y_i(1;t_0)-Y_i(0;t_0)|i\in E_{+1}\}.
\end{align*}
With $Y_i$ subject to censoring, $\Pr\{Y_i(1;t_0)=1|i\in E_{+1}\}$ can be estimated by the Kaplan-Meier (KM) estimates at time $T=t_0$. The estimation is similar for other relevant quantities such as $\Pr\{Y_i(0;t_0)=1|S_i(0)=j,X_i=x\}$ in equation \eqref{factor} under the scenario where $Y_i(Z_i)$ is always observed.

% \vspace*{-1cm}

\section{Simulation Studies}\label{s:sim}

A simulation study is used to assess the performance of the proposed method. The setup is chosen to resemble the NSABP B-40 study by simulating treatment assignment, baseline tumor size category, binary pCR response status, and binary survival status, specifically:
\begin{equation*}
    \mathfrak{D} = [ D_i = \{Z_i, X_i, S_i(0), S_i(1), Y_i(0), Y_i(1)\} , \text{ } i=1, \ldots ,n].
\end{equation*}

We simulate the subject-level data as follows. First, we simulate the categorical baseline tumor category $X_i$ from a multinomial distribution with $\Pr \{X_i=x \} = 0.25, x \in \{0,1,2,3\}$. Next, we simulate $S_i(0)$ given $X_i$ from a Bernoulli distribution with $\Pr \{S_i(0) = 1|X_i = x\} = p(x) \text{ with } p(0),\,p(1),\,p(2),\,p(3) = 0.3,\, 0.25, \,0.25, \,0.2$, respectively. We then simulate the survival status under control, $Y_i(0)$, with a Bernoulli draw with $\Pr\{Y_i(0)=1|S_i(0) = 0, X_i=x\} = 0.7, \,0.65, \,0.6, \,0.55$ for $x=0,\,1,\,2,\,3$, respectively and $\Pr\{Y_i(0)=1|S_i(0) = 1, X_i=x\} = 0.84, \,0.78, \,0.72, \,0.66$ for $x=0,\,1,\,2,\,3$, respectively. The choice of these numbers reflects a 20\% improvement in 3-year EFS for respondents over nonrespondents under the control regimen. 

Next, we simulate the conditional distribution $\{S_i(1)|S_i(0),Y_i(0),X_i\}$. For subjects with $S_i(0) = 1$ we set $S_i(1)$ to be 1 to enforce the monotonicity assumption. For subjects with $S_i(0) = 0$ we draw $S_i(1)$ from a Bernoulli distribution: $\Pr \{ S_i(1)=1|S_i(0)=0, Y_i(0)=y, X_i=x \} = \displaystyle{\frac{\exp(\beta_0+\beta_{1}y+\beta_{2}x)}{1+\exp(\beta_0+\beta_{1}y+\beta_{2}x)}}$. We try different settings for $\bm{\beta}$ = (-3, -5, 0.2), (-5, -1, -2), and (-7, 3, 0.2).

We then simulate the survival status under treatment, $Y_i(1)$, according to the following probability distributions:
% \vspace*{-0.4cm}
\begingroup
\allowdisplaybreaks
\begin{eqnarray*}
&&\Pr \{ Y_i(1)=1|S_i(0)=0, S_i(1)=0, Y_i(0)=0\} = 0.5, \\
&&\Pr \{ Y_i(1)=1|S_i(0)=0, S_i(1)=0, Y_i(0)=1\} = 0.6, \\
&&\Pr \{ Y_i(1)=1|S_i(0)=0, S_i(1)=1, Y_i(0)=0\} = 0.85, \\
&& \Pr \{ Y_i(1)=1|S_i(0)=0, S_i(1)=1, Y_i(0)=1\} = 0.9, \\
&& \Pr \{ Y_i(1)=1|S_i(0)=1, S_i(1)=1, Y_i(0)=0\} = 0.85, \\
&& \Pr \{ Y_i(1)=1|S_i(0)=1, S_i(1)=1, Y_i(0)=1\} = 0.9.
\end{eqnarray*}
\endgroup
% \vspace*{-0.5cm}
These probabilities are chosen to make the 3-year EFS under treatment greater for those who would obtain pCR under treatment than those who would not, and have a greater 3-year EFS for those patients who would be event-free under control than those who would not be event-free under control. We set these probabilities to be independent of the baseline tumor size given the potential outcomes $\{S_i(0),S_i(1),Y_i(0)\}$.

Lastly we simulate the treatment assignment with equal probability for each arm as a Bernoulli draw with $\Pr\{Z_i = 0\}$ and $\Pr\{Z_i = 1\}$ both equal to 0.5 to ensure that independence between potential outcomes and treatment assignment. 
For the simulated data the true average causal effect for principal stratum $S_i(1)=1$, $\mathbb{E}\{Y_i(1)-Y_i(0)|S_i(1) = 1\}$, can be calculated using the above parameters for simulations. The detailed calculations is given in Appendix~\ref{supplC}. 
Under the three parameter settings the true values of the causal estimands are $\theta$=0.179, 0.130, and 0.120, respectively. This means that under the three different settings, if the treatment was administered to all subjects who would achieve pCR under treatment there would be a 17.9\%, 13.0\%, 12.0\% increment in survival respectively, within the time frame under consideration, than had all of them taken the control instead. 

Under each parameter setting and a chosen sample size $n$=1000, 2000, or 4000, we simulate $R$=1000 replicates. A quasi-Newton method, the Broyden-Fletcher-Goldfarb-Shanno algorithm, is used for the optimization. 
% Initial parameter values are chosen to match the true values of $\bm{\beta}$. 
We create $B$=500 bootstrap samples to obtain the 95\% confidence interval for the causal estimates. Let $\widehat{\theta}^{(r)}$ be the mean estimate among bootstrap samples from the $r$ replicate, $r = 1, \ldots ,R$. % \subsection{Hypothesis Testing}

We construct bootstrap confidence intervals to account for the variability introduced by estimating model parameters. 
% For the hypothesis testing on a null causal effect, we construct a two-sided p-value as
% \begin{align*}
%   \min \{ \frac{2\sum_b I(\sign(\widehat{\theta}) \cdot \widehat{\theta}^{(b)} \leq 0)}{B}, 1 \}
% \end{align*}
% where $\widehat{\theta}$ is the data estimate and $\widehat{\theta}^{(b)}$ is the estimate from $b^{th}$ bootstrap sample. The sign function $\sign(\widehat{\theta})=1$ if $\widehat{\theta} \geq 0;$ $ \sign(\widehat{\theta})=0$ otherwise.
We use the basic bootstrap CI, or the pivotal CI \citep{davison1997bootstrap} for constructing CIs from bootstrap estimates. 
Let 
$\{\widehat{\theta}^{(1)}, \widehat{\theta}^{(2)}, \ldots, \widehat{\theta}^{(B)}\}$ are the causal effect estimates from $B$ bootstrap samples. Denote $\theta^*_{(1-\alpha/2)}$ and $\theta^*_{(\alpha/2)}$ as the $100(1 - \alpha/2)\%$ and $100(\alpha/2)\%$ of the bootstrap causal effect estimates. The $100(1-\alpha)\%$ bootstrap confidence interval is given by $(2\widehat{\theta} - \theta^*_{(1-\alpha/2)}, 2\widehat{\theta} - \theta^*_{(\alpha/2)})$
% \begin{equation*}
%     (2\widehat{\theta} - \theta^*_{(1-\alpha/2)}, 2\widehat{\theta} - \theta^*_{(\alpha/2)})
% \end{equation*}
where $\widehat{\theta}$ is the estimate from the data.

We report the empirical bias, mean squared error (MSE), average length of 95\% CIs, and the coverage of those CIs. 
\text{Bias}$(\widehat{\theta}) = \frac{1}{R}\sum_{r=1}^R \{\widehat{\theta}^{(r)} - \theta \}$, 
\text{MSE}$(\widehat{\theta}) =  \frac{1}{R}\sum_{r=1}^R \{\widehat{\theta}^{(r)} - \theta \}^2$,
\text{$95\%$ CI width} = $\frac{1}{R}\sum_{r=1}^R|\widehat{\theta}_{U,0.05}^{(r)} - \widehat{\theta}_{L,0.05}^{(r)}|$, and 
\text{$95\%$ CI coverage} = $\frac{1}{R}\sum_{r=1}^R I\{\theta \in (\widehat{\theta}_{L,0.05}^{(r)}, \widehat{\theta}_{U,0.05}^{(r)})\}$
% \vspace*{-0.2cm}
% \begingroup
% \allowdisplaybreaks
% \begin{gather*}
%     \text{Bias}(\widehat{\theta}) = \frac{1}{R}\sum_{r=1}^R\widehat{\theta}^{(r)} - \theta \\
%     \text{MSE}(\widehat{\theta}) =  \frac{1}{R}\sum_{r=1}^R(\widehat{\theta}^{(r)} - \theta)^2  \\
%     \text{$95\%$ CI Width} = \frac{1}{R}\sum_{r=1}^R|\widehat{\theta}_{U,0.05}^{(r)} - \widehat{\theta}_{L,0.05}^{(r)}| \\
%     \text{$95\%$ CI Coverage} = \frac{1}{R}\sum_{r=1}^R I\{\theta \in (\widehat{\theta}_{L,0.05}^{(r)}, \widehat{\theta}_{U,0.05}^{(r)})\}
% \end{gather*}
% \endgroup
with $\widehat{\theta}_{L,0.05}^{(r)}$ and $\widehat{\theta}_{U,0.05}^{(r)}$ the lower bound and upper bound of the $95\%$  bootstrap CIs of $\widehat{\theta}$ from the $r^{th}$ simulated dataset. 
Table \ref{t:table_simulate} shows the simulation results of the proposed method under three different parameter settings and various sample sizes. Our simulation results show the identification of causal effects is achieved with the bias negligible and the coverage probabilities close to the nominal levels. %The simulation code is available at \verb|github.com/ellenxtan/ps_ate|. 
%when the sample size is 2000 or 4000. We start observing slight over-coverage at a smaller sample size $n=1000$.  

\begin{table}[htp]%[!p]
 \centering
 \def\~{\hphantom{0}}
  \caption{Simulation results of the proposed method under three different parameter settings and various sample sizes.}
    \label{t:table_simulate}
    \fontsize{11}{13.2}\selectfont  %\fontsize{size}{skip}\selectfont
\begin{tabular}{ccccc}%@{}rrrcccc@{}
\toprule
{Sample} & {Empirical} & \multirow{2}{*}{{MSE}} &  {95\% CI} & {95\% CI}\\
{size} & {Bias} &   & {width} & {coverage}\\
\midrule
%  \multicolumn{7}{c}{\textbf{Setting 1: $\bm{\beta}$=(-2, -7, -0.3), $\theta$=0.188}}  \\
%  \addlinespace[0.5ex]
%  1000 & 0.005 & 2.156e-3 & 0.154 & 0.900 & 0.184 & 0.950 \\
%  2000 & 0.001 & 1.078e-3 & 0.107 & 0.900 & 0.128 & 0.950 \\
%  4000 & $<$0.001 & 5.291e-4 & 0.076 & 0.905 & 0.091 & 0.955 \\
\multicolumn{5}{c}{{Setting 1: $\bm{\beta}$=(-3, -5, 0.2), $\theta$=0.179}}  \\
 \addlinespace[0.5ex]
 1000 & -0.011 & 3.001e-3 & 0.206 & 0.952 \\
 2000 & -0.006 & 1.539e-3 & 0.155 & 0.955 \\
 4000 & -0.002 & 6.755e-4 & 0.116 & 0.962 \\
 \addlinespace[1.5ex]
 \multicolumn{5}{c}{{Setting 2: $\bm{\beta}$=(-5, -1, -2), $\theta$=0.130}}  \\
 \addlinespace[0.5ex]
 1000 & -6.011e-5 & 2.496e-3 & 0.185 & 0.943 \\
 2000 & 9.358e-4 & 1.137e-3 & 0.130 & 0.948 \\
 4000 & 1.086e-4 & 5.462e-4 & 0.093 & 0.950 \\
 \addlinespace[1.5ex]
 \multicolumn{5}{c}{{Setting 3: $\bm{\beta}$=(-7, 3, 0.2), $\theta$=0.120}}  \\
 \addlinespace[0.5ex]
 1000 & 0.008 & 2.547e-3 & 0.194 & 0.955 \\
 2000 & 0.006 & 1.319e-3 & 0.141 & 0.957 \\
 4000 & 0.003 & 6.363e-4 & 0.100 & 0.953 \\
\bottomrule
\end{tabular}
\end{table}

\section{Application to NSABP B-40 Trial}
\label{appb40}

\subsection{B-40 Data Analysis}
\label{b40-intro}

Here we apply the proposed method to the NSABP B-40 study \citep{bear2012bevacizumab, bear2015neoadjuvant}. Among the 1206 enrolled participants, 13 withdrew consent, 7 had missing data and 2 had had inoperable disease after chemotherapy. Another 15 patients did not have nodal assessment so their pCR status was not ascertained. We conduct our analysis among the rest 1169 patients. Our purpose is to estimate the causal treatment effect in 3-year EFS and OS among patients who would obtain a pCR had bevacizumab been added to their treatment regimen. KM estimates are used since there are 61 patients censored at 3 years. 

% Among the rest 1169 patients, we lost follow-up in another 61 patients within the first three years. Therefore, 1108 patients had completed follow-up in the application to estimate the causal effect of bevacizumab in 3-year EFS among patients who would respond to chemotherapy plus bevacizumab. Here we consider two sets of analyses: one is based on 1108 patients with their event status known at 3 years; the other one is based on 1169 patients whose event status is subject to censoring.

To apply our method, the clinical tumor size is used as the baseline auxiliary covariate $X$. Patients are grouped into four nearly equal-sized groups: 2-3 cm, 3.1-4 cm, 4.1-6 cm and $>$6 cm, based on breast cancer expert knowledge. We code these four tumor size groups into $\{0,1,2,3\}$, respectively. Among the 589 patients in the control arm, the proportions of those who achieved pCR in each patient group are 28\%, 23\%, 22\% and 17\%, respectively; among the 580 patients in the treatment arm, the proportions of those who achieved pCR are 31\%, 26\%, 25\% and 27\%, respectively. This does not violate the monotonicity assumption $S_i(0) \leq S_i(1)$. The 3-year long-term outcome status $Y_i=1$ if the patient $i$ survived within the first 3 years and 0 otherwise. 

% For the subdataset where subjects had available 3-year EFS status, among the 560 patients in the control group, the proportions of those who achieved pCR are 26\%, 23\%, 16\%, and 22\% across these four tumor size groups respectively; among the 548 patients in the treatment group, the proportions of those who achieved pCR are 31\%, 28\%, 26\%, and 26\%, respectively. 
% For the full dataset where censoring is considered, 

% We use a quasi-Newton method, the Broyden-Fletcher-Goldfarb-Shanno (BFGS) algorithm, for the optimization. The initial estimate $\bm{\beta}_{init}$ is set at $\bm{0}$ since we do not have any information on the model parameters. 
We calculate the 95\% bootstrap confidence intervals from $500$ bootstrap samples. 
% The results are presented in Table \ref{t:table_real}. 
% For the subdataset where $Y_i(0)$ or $Y_i(1)$ is observed, the estimated causal effect in the chance of event-free at 3 years is $\widehat{\theta}=0.203$ (95\% CI=(0.051, 0.393)). The p-value is 0.043. 
% For the full dataset where censoring is considered, 
The estimated causal treatment effect in 3-year EFS among those who would obtained pCR under treatment is $\widehat{\theta}_{\text{EFS}}=0.180$ (95\% CI=(0.056, 0.377)) with $\widehat{\bm{\beta}}=(-1.797,-5.874,0.285)$. The estimated causal treatment effect in 3-year OS among those who would obtained pCR under treatment is $\widehat{\theta}_{\text{OS}}=0.175$ (95\% CI=(0.062, 0.354)) with $\widehat{\bm{\beta}}=(-1.85,-4.764,0.289)$. For both scenarios, because 0 is outside of the 95\% CIs, we would claim that the addition of bevacizumab improves 3-year EFS and OS among patients who would respond to neoadjuvant chemotherapy plus bevacizumab at a 95\% confidence level.

% \begin{table}
% \caption{Application to the NSABP B-40 study: \\
%     the estimated causal effect of bevacizumab in 3-year survival among those  \\
% who would obtain pCR under chemotherapy plus bevacizumab.}
% \label{t:table_real}
% \begin{center}
% \begin{tabular}{llccc}
% \toprule
% Long-term survival  & {$\bm{\widehat{\beta}}$} &{$\widehat{\theta}$} &{95\% CI for $\widehat{\theta}$} \\ \midrule
% %  \multicolumn{6}{c}{\textbf{Scenario 1: limited to subdataset where $Y_i(0)$ and $Y_i(1)$ is observed}}  \\
% %  1108 & (-1.013, -3.42, -7.644) & 0.203 & (0.074, 0.353) & (0.051,0.393) & 0.043 \\
% %  \multicolumn{6}{c}{\textbf{Scenario 2: full dataset where censoring is considered}}  \\
% EFS  & (-1.797, -5.874, 0.285) & 0.180 & (0.056, 0.377) \\
% OS   & (-1.850, -4.764, 0.289) & 0.175 & (0.062, 0.354) \\
% \bottomrule
% \end{tabular}
% \end{center}
% \end{table}

% \vspace*{-1cm}

\subsection{Sensitivity of Initial Parameters in Optimization}  %Computational Issues

% We use contour plots (Figure \ref{figure_contour}) to describe a potential convergence issue under the truth. Although the contour plots show the objective function has a global minimum under the truth, the surface of the objective function is flat along certain direction. This may lead to a premature stop in the estimation of $\widehat{\bm{\beta}}$. However, this does not affect the estimation of $\widehat{\theta}$. 
For the real data application, the initial estimate $\bm{\beta}_{init} = (\beta_0, \beta_1, \beta_2)$ is set at $(0, 0, 0)$. %since we do not have any information on the model parameters. 
To see the sensitivity of initial parameters, we try $9261 = 21\times 21 \times 21$ different initial values of $\bm{\beta}_{init}$, with $\beta_0$, $\beta_1$, and $\beta_2$ on the integer grids of $[-10,10]\times [-10,10]\times [-10,10]$. The corresponding histograms of causal estimates in 3-year EFS and 3-year OS at convergence are presented in Figure~\ref{figure_hist}. Our estimated model parameters $\widehat{\bm{\beta}}$ in Section~\ref{b40-intro} achieves the minimum loss of equation (\ref{betabeta}). Except for some extreme initialization such as (10,10,10), most of the $\widehat{\theta}$ are the same or very close to the causal estimates calculated by using $\bm{\beta}_{init}$ = (0,0,0) as initial parameters. Therefore, we conclude that the causal estimand is not sensitive to the initial parameter settings in optimization. In practice, we suggest running optimization with various initial values and identify the right estimate.

% \begin{figure}
%   \centerline{\includegraphics[width=6.5in]{figure_contour.eps}}
%   \caption{Contour plots of objective function under three settings with $\beta_2$ fixed at its true value. The upper panel describes the general shape of the objective function with an upper threshold at 1e-03. The lower panel provides corresponding zoom-in near the global minimum. The red dot in the center of each contour plot represents the true $\bm{\beta}$.}
% \label{figure_contour}
% \end{figure}

\begin{figure}[htp]%[!p]
\centering\includegraphics[width=0.9\linewidth]{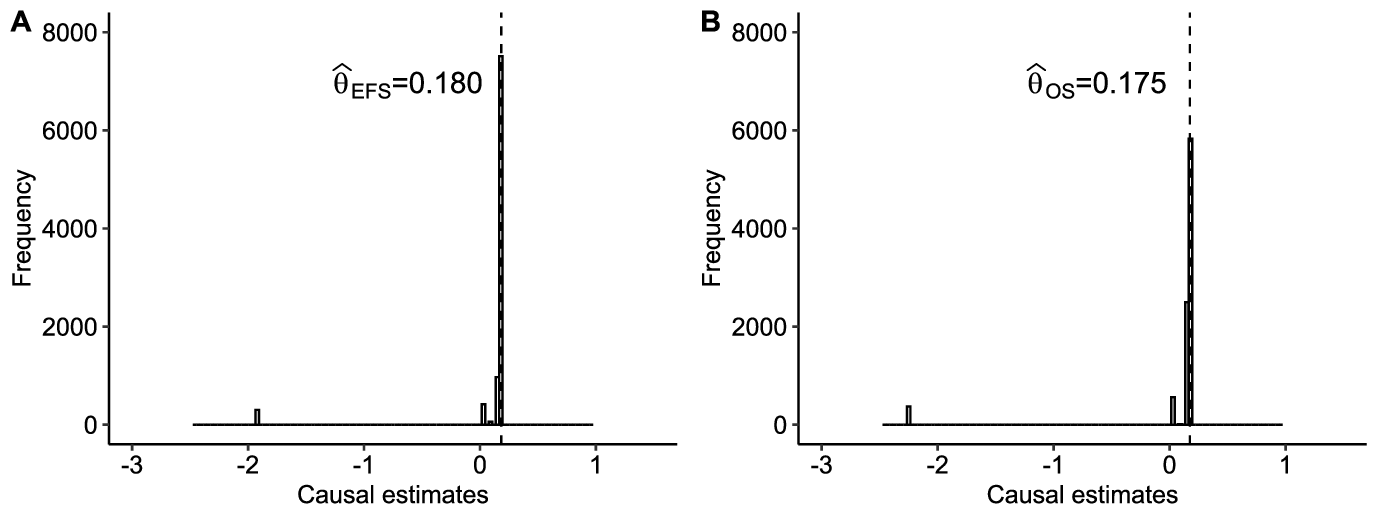}
\caption{Histogram of the causal estimates $\widehat{\theta}$ obtained from $9261=21\times21\times21$ different initial values of $\bm{\beta}_{init}$ in the optimization process for 3-year EFS (Figure A) and 3-year OS (Figure B), respectively. The values of $\beta_0$, $\beta_1$ and $\beta_2$ vary on the integer grids of $[-10,10]\times [-10,10]\times [-10,10]$. Except for some extreme initialization such as $\bm{\beta}_{init}$ = (10,10,10), most of the $\widehat{\theta}$ are the same or very close to the causal estimate calculated by using $\bm{\beta}_{init}$ = (0,0,0) as initial parameters.}
\label{figure_hist}
\vspace{-0.4cm}
\end{figure}

% Contour plots are generated to simulate the objective function \eqref{GM} under the truth where
% \begin{gather*}
%     G_M(x,y;\bm{\beta}) = \Pr\{S_i(1) = 1|S_i(0) = 0, Y_i(0) = y, X_i = x\} \\
%     G_R(x,y) = \Pr\{Y_i(0) = y|S_i(0) = 0,X_i = x\} 
% \end{gather*}

% Then we can have $G_L(x)$ as
% \begin{align*}
%     G_L(x) = \sum_{y=0}^1 G_M(x,y;\bm{\beta}) \cdot G_R(x,y)
% \end{align*}

% The objective function is given by
% \begin{align*}
%     f(x) = \sum_{X=0}^l[{G}_L(x) - \sum_{y=0}^1 G_M(x,y;\bm{\beta}) \cdot {G}_R(x,y)]^2
% \end{align*}

% We conduct grid search for $\beta_0$ and $\beta_1$ with $\beta_2$ fixed to find the pattern of the objective function to make sure it has a minimum in the neighbourhood of the truth.

% For sensitivity analysis, for 9261 (21*21*21) pairs of $\bm{\beta}$, the sensitivity analysis yields a stable causal estimate of 0.186 for most pairs (Figure ). The sensitivity results suggest our estimation method is stable and the quantities of interest are not sensitive to the initial parameter settings in optimization.

\subsection{Comparisons to Sensitivity Analysis Method}

We compare the performance of our method with that of the sensitivity analysis similar to \cite{gilbert2003sensitivity} and \cite{shepherd2006sensitivity}. Recall that for $X=x \in \varGamma = \{0, 1, \ldots, K\}$, we have an equation system:
\begin{align*}
    \widehat{G}_L(x) = \sum_{y=0}^1 G_M(x,y;\bm{\beta}) \cdot \widehat{G}_R(x,y); ~~x \in \Gamma  = \{0, 1, \ldots, K\}
\end{align*}
where $G_M(x,y;\bm{\beta}) = \displaystyle{\frac{\exp(\beta_0+\beta_{1}y+\beta_{2}x)}{1+\exp(\beta_0+\beta_{1}y+\beta_{2}x)}}$. In the sensitivity analysis we vary the value of $\beta_1$ from -7 to -3. Then for each category of $x$ we define $\beta_x = \beta_0 + \beta_2 x$. Under this reparameterization we have only one unknown parameter, $\beta_x$, for each equation. We then solve for $\beta_x$ for each equation independently and obtain the causal estimand subsequently.

By varying values of $\beta_1$ around the estimated $\widehat{\beta_1}$ from Section~\ref{b40-intro}, the corresponding causal estimands in 3-year EFS and 3-year OS are presented in Table \ref{t:table_sensi}. The estimated causal effects in 3-year EFS vary from 0.159 to 0.181 with none of the 95\% CIs including 0; the estimated causal effects in 3-year OS vary from 0.132 to 0.176 with none of the 95\% CIs including 0. These intervals overlap a lot with the confidence intervals of real data. %in Table \ref{t:table_real}. 
These results suggest the addition of bevacizumab may improve 3-year EFS and 3-year OS among patients who would respond to neoadjuvant chemotherapy plus bevacizumab.

\begin{table}[!htp]
\caption{Sensitivity analysis for the estimated causal effect of bevacizumab in 3-year survival among those who would obtain pCR under chemotherapy plus bevacizumab.}
\label{t:table_sensi}
\begin{center}
\begin{tabular}{llcc}
\toprule
Long-term survival  & $\beta_1$ &$\widehat{\theta}$ &{95\% CI for $\widehat{\theta}$} \\ \midrule
%  \multicolumn{6}{c}{\textbf{Scenario 1: limited to subdataset where $Y_i(0)$ and $Y_i(1)$ is observed}}  \\
%  1108 & (-1.013, -3.42, -7.644) & 0.203 & (0.074, 0.353) & (0.051,0.393) & 0.043 \\
%  \multicolumn{6}{c}{\textbf{Scenario 2: full dataset where censoring is considered}}  \\
EFS  & -7 & 0.181 & (0.025, 0.290) \\
     & -6 & 0.180 & (0.043, 0.289) \\
     & -5 & 0.178 & (0.040, 0.282) \\
     & -4 & 0.172 & (0.058, 0.272) \\
     & -3 & 0.159 & (0.065, 0.267) \\
\addlinespace[1ex]
OS   & -7 & 0.176 & (0.055, 0.278) \\
     & -6 & 0.172 & (0.067, 0.267) \\
     & -5 & 0.166 & (0.069, 0.267) \\
     & -4 & 0.153 & (0.066, 0.235) \\
     & -3 & 0.132 & (0.064, 0.200) \\
\bottomrule
\end{tabular}
\end{center}
\end{table}

\section{Discussion and Future Work}\label{s:discuss}

We have proposed a method under the principal stratification framework to estimate causal effects of a treatment on a binary long-term endpoint conditional on a post-treatment binary marker in randomized controlled clinical trials. We also extend our method to address censored outcome data. In our motivating study, we demonstrate the causal effect of the new regimen in the long-term survival for patients who would achieve pCR. Other principal stratum causal effects can be estimated in a similar fashion. Our approach can play an important role in a sensitivity analysis.

Identification of causal effects is achieved through two assumptions. First, a subject who responds under the control would respond if given the treatment. This monotonicity assumption could prove valuable \citep{bartolucci2011modeling} and can be justified in many scenarios that the additional therapy would help to improve the response. When the auxiliary variable $X$ is discrete, we can identify and estimate $\Pr\{S(1)=1|S(0)=0,X\}$ under the monotonicity assumption. Second, a parametric model is used to describe the counterfactual response under the treatment for a control non-respondent \citep{shepherd2006sensitivity}. Both the future long-term outcome and a baseline covariate are predictors in this parametric model. \cite{shepherd2006sensitivity} does not consider when the auxiliary $X$ is discrete, the parameters of model~\eqref{Model} can be identified when the level of the discrete covariate is at least of the same dimension of model parameters. Instead they perform sensitivity analyses by varying the values of those model parameters in order to estimate the causal estimands. It is recognized that no diagnostic tool is available to verify the validity of this counterfactual model. 

In the motivating dataset, we discretize a continuous baseline variable into several levels. In practice, the linearity assumption may not hold. We would consider a two-pronged approach: 1) to estimate $G_L(x)$ and $G_R(x,y)$ by nonparametric estimates such as spline or kernel density estimates for a univariate continuous $X$; 2) to use a more flexible model for the counterfactual response such as a logistic regression with natural cubic spline with fixed and even-spaced knots along the domain of $X$. For each given $x$, we can still use the same probabilistic argument to link those estimates and the model parameters. The objective function would be a weighted sum of the squared difference of those probabilistic estimates.

\acks{This work is supported by the National Cancer Institute at the National Institutes of Health, U.S. Department of Health and Human Services, Public Health Service grants U10-CA180868 (NCTN), U10-CA180822 (NRG SDMC).}

\clearpage

\bibliography{ref}

\newpage
\appendix

% \section{My Proof of Theorem 1}

% This is a boring technical proof.

% \section{My Proof of Theorem 2}

% This is a complete version of a proof sketched in the main text.

\section{Estimation of \texorpdfstring{$\Pr \{S_i(0) = 1|X_i=x\}$}{Pr\{Si(0)=1|Xi=x\}} and \texorpdfstring{$\Pr \{S_i(1) = 1|X_i=x\}$}{Pr\{Si(1)=1|Xi=x\}}  } \label{supplA}

We use the maximum likelihood approach to estimate $\Pr \{S_i(0) = 0|X_i=x\}$, $\Pr \{S_i(0) = 1|X_i=x\}$ and $\Pr \{S_i(1) = 1|X_i=x\}$. Let 
\begin{equation*}
    E_{jkx} = \{ i: S_i(0)=j, S_i(1)=k|X_i=x \},~~ j,k=0,1, x \in \Gamma
\end{equation*}
be the principal stratum under each category $X=x$. Because of the monotonicity assumption, $E_{10x}$ is empty. Let
\begin{equation*}
    p_{jkx} = \Pr\{E_{jkx}\} = \Pr\{S_i(0)=j, S_i(1)=k|X_i=x\},~~ j,k=0,1, x \in \Gamma
\end{equation*}
Therefore, ${p}_{00x}+p_{01x}+{p}_{11x}=1$ for all $x\in\Gamma$. For each $x$, $\Pr\{E_{jkx}\}$ can be estimated from the observed data $\{Z_i,X_i,S_i(Z_i),i=1,2,\ldots,n\}$ via maximum likelihood.  
Let $N_{zsx}$ be the total number of subjects with $Z=z, S(Z)=s$ and baseline category $x$ with $\sum_{Z;S=0,1;X}N_{zsx}=n$. Then the likelihood function for $(p_{00x}, p_{01x}, p_{11x})$ is given by
\begin{align*}
    L&(p_{00x}, p_{01x}, p_{11x}|N_{00x},N_{01x},N_{10x},N_{11x}) 
    \propto f(N_{zsx}) \\
    & \propto \Pr\{S(0)=0|X=x\}^{N_{00x}} \cdot \Pr\{S(0)=1|X=x\}^{N_{01x}} \\
    & \quad \cdot \Pr\{S(1)=0|X=x\}^{N_{10x}} \cdot \Pr\{S(1)=1|X=x\}^{N_{11x}} \\
    &= (p_{00x}+p_{01x})^{N_{00x}} \cdot p_{11x}^{N_{01x}} \cdot p_{00x}^{N_{10x}} \cdot (p_{01x}+p_{11x})^{N_{11x}} 
~~\mbox{(by monotonicity assumption)}\\
    &= (1-p_{11x})^{N_{00x}} \cdot p_{11x}^{N_{01x}} \cdot p_{00x}^{N_{10x}} \cdot (1-p_{00x})^{N_{11x}} \\
    &= (1-p_{11x})^{N_{00x}} \cdot p_{11x}^{N_{01x}} \cdot (1-p_{+1x})^{N_{10x}} \cdot p_{+1x}^{N_{11x}}
\end{align*}

(1) When $N_{00x} \cdot N_{11x} \geq N_{01x} \cdot N_{10x}$, the resulting MLEs for $(p_{00x}, p_{01x}, p_{11x})$ are given by
\begingroup
\allowdisplaybreaks
\begin{align*}
    &\widehat{p}_{00x} = \widehat{\Pr}\{S_i(0) = 0, S_i(1) = 0|X_i = x\} = 1 - \widehat{p}_{+1x} \\
    &\quad ~~ = \frac{N_{10x}}{N_{10x}+N_{11x}} 
    = \frac{\sum_i I(Z_i = 1, S_i(1) = 0, X_i = x)}{\sum_i I(Z_i = 1,X_i = x)}  \\
    &\widehat{p}_{11x} = \widehat{Pr}\{S_i(0) = 1, S_i(1) = 1|X_i = x\} 
    = \frac{N_{01x}}{N_{00x}+N_{01x}}  
    = \frac{\sum_i I(Z_i = 0, S_i(0) = 1, X_i = x)}{\sum_i I(Z_i = 0,X_i = x)}  \\
    &\widehat{p}_{01x} = \widehat{Pr}\{S_i(0) = 0, S_i(1) = 1|X_i = x\} = 1 - \widehat{p}_{00x} - \widehat{p}_{11x}
\end{align*}
\endgroup

Obviously for each $x\in \Gamma$, $\widehat{p}_{00x}$ is the proportion of non-respondents in the treatment arm with $X=x$; 
$\widehat{p}_{11x}$ is the proportion of respondents in the control arm with $X=x$.

(2) When $N_{00x} \cdot N_{11x} < N_{01x} \cdot N_{10x}$, $\widehat{p}_{11x}=\widehat{p}_{+1x}$. The likelihood function is given by
\begingroup
\allowdisplaybreaks
\begin{align*}
    &L(p_{00x}, p_{01x}, p_{11x}|N_{00x},N_{01x},N_{10x},N_{11x}) \\
    &= (1-p_{11x})^{N_{00x}} \cdot p_{11x}^{N_{01x}} \cdot (1-p_{11x})^{N_{10x}} \cdot p_{11x}^{N_{11x}} \\
    &= (1-p_{11x})^{N_{00x}+N_{10x}} \cdot p_{11x}^{N_{01x}+N_{11x}}
\end{align*}
\endgroup

The resulting MLEs for $(p_{00x}, p_{01x}, p_{11x})$ are given by
\begingroup
\allowdisplaybreaks
\begin{align*}
    &\widehat{p}_{01x} = \widehat{Pr}\{S_i(0) = 0, S_i(1) = 1|X_i = x\} = 0 \\
    &\widehat{p}_{00x} = \widehat{\Pr}\{S_i(0) = 0, S_i(1) = 0|X_i = x\} 
    = \frac{N_{+0x}}{N_{++x}}
    = \frac{\sum_i I(S_i = 0, X_i = x)}{\sum_i I(X_i = x)}  \\
    &\widehat{p}_{11x} = \widehat{Pr}\{S_i(0) = 1, S_i(1) = 1|X_i = x\} 
    = \frac{N_{+1x}}{N_{++x}}
    = \frac{\sum_i I(S_i = 1, X_i = x)}{\sum_i I(X_i = x)}
\end{align*}
\endgroup

Then $\widehat{p}_{00x}$ is the proportion of non-respondents among all subjects with $X=x$; $\widehat{p}_{11x}$ is the proportion of respondents among all subjects with $X=x$.

% \section{Proof of Consistency of \texorpdfstring{$\widehat{\bm{\beta}}$}{TEXT} and \texorpdfstring{$\widehat{\bm{\theta}}$}{TEXT}} \label{supplB}
\section{Proof of Consistency of Model Parameters and Causal Estimands} \label{supplB}

Here we show our estimator $\widehat{\bm{\beta}}$ is a consistent estimator for $\bm{\beta}$. We first show that $\widehat{\bm{\beta}}$ can be considered as an extremum estimator as defined by \citet{hayashi2000econometrics}. Then we prove that the conditions set forth by \citet{hayashi2000econometrics} for consistency of an extremum estimator are satisfied by our estimator. Then by Slutsky's theorem, the causal estimand $\widehat{\theta}$ is a consistent estimator for $\theta$.

\begin{definition}[Extremum Estimator]
\textit{An estimator $\widehat{\eta}$ is an extremum estimator if there is a function $Q_n(\eta)$ such that}
\citep{hayashi2000econometrics}
\begin{align*}
    \widehat{\eta} = \operatorname*{\arg\,max}_{\eta} Q_n(\eta); ~~\eta \in H .
\end{align*}
\end{definition}

One example of an extremum estimator is the maximum likelihood estimator where
\begin{align*}
    Q_n(\eta) = \prod_{i=1}^n f(x_i|\eta).
\end{align*}
Here we minimize the objective function,
\begin{align*}
    Q_n(\bm{\beta}) &= \sum_{x=0}^K Q_n^{(x)}(\bm{\beta}) \\
    &= \sum_{x=0}^K \{\widehat{G}_L(x) - \sum_{y=0}^1 G_M(x,y;\bm{\beta}) \cdot \widehat{G}_R(x,y)\}^2; \, x \in \Gamma
\end{align*}
which is equivalent to maximizing $-Q_n(\bm{\beta})$. Therefore $\widehat{\bm{\beta}}$ is an extremum estimator.

Let
\begin{align*}
    Q_0(\bm{\beta}) &= \sum_{x=0}^K Q_0^{(x)}(\bm{\beta}); \, x \in \Gamma
\end{align*}
where $Q_0^{(x)}(\bm{\beta})= \{{G}_L(x) - \displaystyle{\sum_{y=0}^1} G_M(x,y;\bm{\beta}) \cdot {G}_R(x,y)\}^2$. We present sufficient conditions for the existence of a unique local minimizer of $Q_0(\bm{\beta})$ in Lemma \ref{lemma2}. 

\begin{lemma}\label{lemma2}
There exists a unique local minimizer $\bm{\beta_0}$ for $Q_0(\bm{\beta})$ if:
\begin{enumerate}
    \item[(a)] $Q_0^{(x)}(\bm{\beta_0})=0$, $\forall x\in \Gamma=\{0,1,2,\ldots,K\}$.
    \item[(b)] $\rank \bigg |\displaystyle{\frac{\partial \tilde{Q_0}(\bm{\beta})}{\partial \bm{\beta}}} \bigg |_{\bm{\beta}=\bm{\beta_0}} \geq \dim(\bm{\beta})$ where $\tilde{Q_0}(\bm{\beta})=\{Q_0^{(0)}(\bm{\beta}), Q_0^{(1)}(\bm{\beta}), \dots, Q_0^{(K)}(\bm{\beta})\}^T$.
\end{enumerate}
\end{lemma}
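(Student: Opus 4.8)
\textbf{Proof proposal for Lemma \ref{lemma2}.}

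The plan is to treat $Q_0(\bm{\beta}) = \sum_{x=0}^K Q_0^{(x)}(\bm{\beta})$ as a sum of nonnegative terms and argue in two stages: first that any $\bm{\beta}$ satisfying condition (a) is a global (hence local) minimizer with $Q_0(\bm{\beta}) = 0$, and second that condition (b) forces this minimizer to be isolated, so no other $\bm{\beta}$ in a neighborhood can also achieve the minimum. For the first stage, note each $Q_0^{(x)}(\bm{\beta}) = [G_L(x) - \sum_{y=0}^1 G_M(x,y;\bm{\beta}) G_R(x,y)]^2 \ge 0$, so $Q_0(\bm{\beta}) \ge 0$ everywhere, and condition (a) says $Q_0(\bm{\beta_0}) = 0$. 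Hence $\bm{\beta_0}$ attains the global minimum; the only thing that remains is uniqueness.

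For the uniqueness/isolation stage, I would argue by contradiction. Suppose $\bm{\beta_0}$ is not a unique local minimizer, i.e. there is a sequence $\bm{\beta}_m \to \bm{\beta_0}$, $\bm{\beta}_m \ne \bm{\beta_0}$, with $Q_0(\bm{\beta}_m) = 0$. Since $Q_0$ is a sum of squares, $Q_0(\bm{\beta}_m) = 0$ forces $Q_0^{(x)}(\bm{\beta}_m) = 0$ for every $x$, equivalently $\tilde{Q}_0(\bm{\beta}_m) = \bm{0} = \tilde{Q}_0(\bm{\beta_0})$, where $\tilde{Q}_0(\bm{\beta}) = [Q_0^{(0)}(\bm{\beta}), \dots, Q_0^{(K)}(\bm{\beta})]^T$. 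A first-order Taylor expansion of the vector-valued map $\tilde{Q}_0$ around $\bm{\beta_0}$ then gives
\begin{equation*}
\bm{0} = \tilde{Q}_0(\bm{\beta}_m) - \tilde{Q}_0(\bm{\beta_0}) = \frac{\partial \tilde{Q}_0(\bm{\beta})}{\partial \bm{\beta}}\bigg|_{\bm{\beta}=\bm{\beta_0}} (\bm{\beta}_m - \bm{\beta_0}) + o(\|\bm{\beta}_m - \bm{\beta_0}\|).
\end{equation*}
Dividing by $\|\bm{\beta}_m - \bm{\beta_0}\|$ and passing to a convergent subsequence of the unit vectors $u_m = (\bm{\beta}_m - \bm{\beta_0})/\|\bm{\beta}_m - \bm{\beta_0}\|$ with limit $u$, $\|u\|=1$, I obtain $\frac{\partial \tilde{Q}_0}{\partial \bm{\beta}}\big|_{\bm{\beta_0}} u = \bm{0}$, so the Jacobian has a nontrivial null vector and its rank is strictly less than $\dim(\bm{\beta})$, contradicting condition (b). This establishes that $\bm{\beta_0}$ is isolated among the zeros of $Q_0$, hence the unique local minimizer in a neighborhood.

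The main obstacle I anticipate is being careful about exactly what ``rank of a Jacobian at a point where the function vanishes'' means here: since each $Q_0^{(x)}$ is itself a square, its gradient at $\bm{\beta_0}$ automatically vanishes (the derivative of $t \mapsto t^2$ at $t=0$ is $0$), so the Jacobian $\partial \tilde{Q}_0/\partial \bm{\beta}$ evaluated literally at $\bm{\beta_0}$ is the zero matrix and condition (b) as written would be vacuous. The fix is that condition (b) must be read as a rank condition on the Jacobian of the \emph{inner} residual map $r(\bm{\beta}) = (G_L(x) - \sum_y G_M(x,y;\bm{\beta})G_R(x,y))_{x \in \Gamma}$ — equivalently, $\rank$ of $\partial \tilde{Q}_0/\partial\bm{\beta}$ understood in the limiting sense, or the rank of $\partial r/\partial\bm{\beta}$ at $\bm{\beta_0}$ — and the Taylor argument above should be carried out on $r$ rather than on $\tilde{Q}_0$ directly; since $r(\bm{\beta_0}) = \bm{0}$ as well, the same contradiction goes through. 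I would also need mild smoothness of $G_M(x,y;\cdot)$ in $\bm{\beta}$ to justify the first-order expansion, which is immediate for the logistic form in \eqref{Model}. Apart from this bookkeeping, the argument is elementary: nonnegativity gives that a zero of $Q_0$ is a global minimizer, and the rank condition is precisely what rules out a nearby second zero.
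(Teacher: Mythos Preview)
Your proposal is correct and in fact more careful than the paper's own argument. The paper's proof handles step one identically (nonnegativity of $Q_0$ plus condition (a) gives a global minimizer), but for uniqueness it simply invokes the Implicit Function Theorem on $\tilde{Q}_0$ together with condition (b) to assert that $\bm{\beta_0}$ is locally the unique solution as a function of the data $(G_L(x),G_R(x,y))$. Your Taylor-expansion-plus-contradiction argument is the elementary content underlying that IFT step, so the two routes are morally the same; yours is slightly more self-contained, while the paper's is a one-line appeal to a standard theorem.

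Where your proposal goes beyond the paper is in flagging that, because each $Q_0^{(x)}$ is a square vanishing at $\bm{\beta_0}$, the Jacobian $\partial \tilde{Q}_0/\partial\bm{\beta}$ is literally the zero matrix there, so condition (b) as stated cannot hold and must be read as a rank condition on the residual map $r(\bm{\beta}) = \big(G_L(x) - \sum_y G_M(x,y;\bm{\beta})G_R(x,y)\big)_{x\in\Gamma}$. The paper's proof does not address this point; its IFT invocation only makes sense under the same reinterpretation you propose. So your observation is not merely bookkeeping but a genuine clarification of what hypothesis (b) has to mean for either argument to go through.
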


\begin{proof}
From (a) we have that $\bm{\beta_0}$ minimizes $Q_0(\bm{\beta})$ since $Q_0(\bm{\beta}) \geq 0$, $\forall \bm{\beta}$ and $Q_0(\bm{\beta_0})=0$.

Then from (b) and the Implicit Function Theorem, there exists a unique function $g\{\bm{G_L(x)}$, $\bm{G_R(x, y)}\}$ such that $g\{\bm{G_L(x)}$, $\bm{G_R(x, y)}\}=\bm{\beta_0}$, in the neighborhood of $\{\bm{G_L(x)}$, $\bm{G_R(x, y)}\}$ where $\{\bm{G_L(x)}$, $\bm{G_R(x, y)}\} = [G_L(x), G_R(x, y); x \in \{0,1, \ldots, K\}, y=0, 1]$. 
Thus, $\bm{\beta_0}$ is a unique local minimizer for $Q_0(\bm{\beta})$. 
% This completes the proof of Lemma \ref{lemma2}.
\end{proof}

The proof of Theorem \ref{Theorem} is given as below.
\begin{proof}
From Proposition 7.1 in \citep{hayashi2000econometrics}: an extremum estimator $\widehat{\eta}$ is a consistent estimator for $\eta$ if there is a function $Q_0(\eta)$ satisfying the following two conditions:
\begin{enumerate}
    \item[(I)] Identification: $Q_0(\eta)$ is uniquely maximized on $H$ at $\eta_0 \in H$.
    \item[(II)] Uniform convergence: $Q_n(\cdot)$ converges uniformly in probability to $Q_0(\cdot)$.
\end{enumerate}

The condition (I) is satisfied according to Lemma \ref{lemma2}. To show that the condition (II) is satisfied here, let
\begingroup
\allowdisplaybreaks
\begin{align*}
    \begin{split}
        Q_n(\bm{\beta}) &=\sum_{x=0}^K Q_n^{(x)}(\bm{\beta})^2 \\
        &= \sum_{x=0}^K \{\widehat{G}_L(x) - \sum_{y=0}^1 G_M(x,y;\bm{\beta}) \cdot \widehat{G}_R(x,y)\}^2
    \end{split} \\
    \begin{split}
        Q_0(\bm{\beta}) &=\sum_{x=0}^K Q_0^{(x)}(\bm{\beta})^2 \\
        &= \sum_{x=0}^K \{{G}_L(x) - \sum_{y=0}^1 G_M(x,y;\bm{\beta}) \cdot {G}_R(x,y)\}^2. 
    \end{split}
\end{align*}
\endgroup

From
\begin{align*}
    |Q_n(\bm{\beta}) - Q_0(\bm{\beta})|
    &=|\sum_{x=0}^K Q_n^{(x)}(\bm{\beta})^2 - \sum_{x=0}^K Q_0^{(x)}(\bm{\beta})^2| \\
    &\leq \sum_{x=0}^K |Q_n^{(x)}(\bm{\beta})^2 - Q_0^{(x)}(\bm{\beta})^2| \\
    &= \sum_{x=0}^K |Q_n^{(x)}(\bm{\beta}) - Q_0^{(x)}(\bm{\beta})| \cdot |Q_n^{(x)}(\bm{\beta}) + Q_0^{(x)}(\bm{\beta})| \\
    &\leq \sum_{x=0}^K 2 \cdot |Q_n^{(x)}(\bm{\beta}) - Q_0^{(x)}(\bm{\beta})|, \quad x \in \Gamma
\end{align*}
because $0 \leq |Q_n^{(x)}(\bm{\beta})| \leq 1$ and $0 \leq |Q_0^{(x)}(\bm{\beta})| \leq 1$, each of which is a difference of two probability estimates. 

Therefore,
\begin{align}
    &|Q_n(\bm{\beta}) - Q_0(\bm{\beta})| \nonumber \\
    &\leq \sum_{x=0}^K 2 \cdot \big \{ |\widehat{G}_L(x)-G_L(x)| + \sum_{y=0}^1 G_M(x, y; \bm{\beta}) \cdot |\widehat{G}_R(x,y)-G_R(x,y)| \big \} \nonumber \\
    &\leq \sum_{x=0}^K 2 \cdot \big \{ |\widehat{G}_L(x)-G_L(x)| + \sum_{y=0}^1 |\widehat{G}_R(x,y)-G_R(x,y)| \big \} \label{inequal}
\end{align}
because $G_M(x, y; \bm{\beta})$ is a probability bounded between 0 and 1.

Since $\widehat{G}_L(x)$ and  $\widehat{G}_R(x, y)$ are either sample proportions or their ratios, 
\begin{gather*}
    \widehat{G}_L(x) \overset{p}{\to} G_L(x), \text{ as } n \to \infty \\
    \widehat{G}_R(x, y) \overset{p}{\to} G_R(x, y), \text{ as } n \to \infty
\end{gather*}

As $\widehat{G}_L(x)$ and  $\widehat{G}_R(x, y)$ do not involve $\bm{\beta}$, from (\ref{inequal}) we have 
\begin{gather*}
    Q_n(\bm{\beta}) \overset{p}{\Longrightarrow} Q_0(\bm{\beta}), \text{ as } n \to \infty
\end{gather*}
where $\overset{p}{\Longrightarrow}$ denotes uniform convergence in probability.  This confirms condition  (II) and completes the proof of $\widehat{\bm{\beta}}\overset{p}{\to} \bm{\beta}$ as $n \to \infty$.

Because the causal estimate $\widehat{\theta}$ is a continuously differentiable function of $\widehat{\bm{\beta}}$ and relevant sample proportions, by Slutsky's theorem,  $\widehat{\theta} \overset{p}{\to} \theta$  as $n \to \infty$. 
% This completes the proof of Theorem \ref{Theorem}.
\end{proof}

\section{Calculation of True Principal Stratum Causal Effects} \label{supplC}

For the simulated data, the true average causal effect for principal stratum $S_i(1) = 1$ can be calculated by
\begin{align*}
    \mathbb{E}\{Y_i(1)-Y_i(0)|S_i(1) = 1\} &= \mathbb{E}\{Y_i(1)=1|S_i(1) = 1\} - \mathbb{E}\{Y_i(0)=1|S_i(1) = 1\} \\
    &= \frac{\Pr \{Y_i(1)=1, S_i(1)=1\} - \Pr \{ Y_i(0)=1, S_i(1)=1 \}}{\Pr \{ S_i(1)=1 \}}
\end{align*}
where
\begin{align*}
    \Pr \{ S_i(1)=1 \} &= \sum_x \Big \{ \Pr \{S_i(0) = 1|X_i = x\} \cdot \Pr \{X_i = x\} \\ 
        & + \sum_{y} \big [ \Pr \{X_i=x\} \cdot \Pr \{ S_i(0)=0|X_i=x \} \\
        & \cdot \Pr \{ Y_i(0)=y|S_i(0)=0, X_i=x \} \\
        & \cdot \Pr \{ S_i(1)=1|S_i(0)=0, Y_i(0)=y, X_i=x \} \big ] \Big \}
\end{align*}
\begin{align*}
    \Pr \{ Y_i(0)=1, S_i(1)=1 \} &= \sum_x \Big [ \Pr \{X_i=x\} \cdot \Pr \{ S_i(0)=1|X_i=x \} \\
        & \cdot \Pr \{ Y_i(0)=1|S_i(0)=1, X_i=x \} \\
        & + \Pr \{ X_i=x \} \cdot \Pr \{ S_i(0)=0|X_i=x \} \\
        & \cdot \Pr \{ Y_i(0)=1|S_i(0)=0, X_i=x \} \\
        & \cdot \Pr \{ S_i(1)=1|S_i(0)=0, Y_i(0)=1, X_i=x \} \Big ]
\end{align*}
\begin{align*}
    \Pr \{ Y_i(1)=1, S_i(1)=1 \} &= \sum_x \sum_{y} \Big [ \Pr \{X_i=x\} \cdot \Pr \{ S_i(0)=1|X_i=x \} \\
        & \cdot \Pr \{ Y_i(0)=y|S_i(0)=1, X_i=x \} \\
        & \cdot \Pr \{ Y_i(1)=1|Y_i(0)=y, S_i(0)=1, X_i=x \} \\
        & + \Pr\{ X_i=x \} \cdot \Pr \{ S_i(0)=0|X_i=x \} \\
        & \cdot \Pr \{ Y_i(0)=y|S_i(0)=0, X_i=x \} \\
        & \cdot \Pr \{ S_i(1)=1|S_i(0)=0, Y_i(0)=y, X_i=x \} \\
        & \cdot \Pr \{ Y_i(1)=1|S_i(0)=0, S_i(1)=1, Y_i(0)=y, X_i=x \} \Big ]
\end{align*}

\end{document}